\newcommand{\vast}{\bBigg@{3.2}}
\newcommand{\Vast}{\bBigg@{4.2}}
\newtheorem{theorem}{Theorem}
\begin{document}
	%
	% paper title
	% can use linebreaks \\ within to get better formatting as desired
	\title{\linespread{1} Performance Analysis and Optimization of RIS-Assisted Networks in Nakagami-$m$ Environment}
	
	%\author{\small Monjed H. Samuh\authorrefmark{1},
	%       Anas~M.~Salhab\authorrefmark{2},~\IEEEmembership{\small Senior~Member,~IEEE}
	%     Ahmed H. Abd El-Malek\authorrefmark{3},~\IEEEmembership{\small Member,~IEEE}
	
	%\authorrefmark{1}Department of Applied Mathematics \& Physics,
	%Palestine Polytechnic University, Hebron, Palestine\\ E-mail: monjedsamuh@ppu.edu\\
	%\authorrefmark{2}Department
	%of Electrical Engineering, King Fahd University of Petroleum \&
	%Minerals, Dhahran 31261, Saudi Arabia\\ E-mail: salhab@kfupm.edu.sa\\
	%\authorrefmark{3}School of Electronics, Communications \& Computer Engineering, Egypt-Japan University of Science \& Technology, Alexandria 21934, Egypt\\ E-mail: ahmed.abdelmalek@ejust.edu.eg}
	
	\author{\small \authorblockN{Monjed H. Samuh\authorrefmark{1},
			Anas~M.~Salhab\authorrefmark{2},~\IEEEmembership{\small Senior~Member,~IEEE},
			Ahmed H. Abd El-Malek\authorrefmark{3},~\IEEEmembership{\small Member,~IEEE}}\\
		\vspace{0.2cm}
		\authorrefmark{1}Applied Mathematics \& Physics Department,
		Palestine Polytechnic University, Hebron, Palestine,  e-mail: monjedsamuh@ppu.edu\\
		\vspace{0.2cm}
		\authorblockA{\authorrefmark{2}Electrical Engineering Department, King Fahd University of Petroleum \& Minerals, Dhahran 31261, Saudi Arabia\\
			e-mail: salhab@kfupm.edu.sa}\\
		\vspace{0.2cm}
		\authorrefmark{3}School of Electronics, Communications \& Computer Engineering, Egypt-Japan University of Science \& Technology, Alexandria 21934, Egypt, e-mail: ahmed.abdelmalek@ejust.edu.eg}

	%\vspace{-0.6cm}}
	
	%\thanks{Manuscript received April 19, 2005; revised January 11, 2007.}
	
	% make the title area
	\maketitle{}

	\begin{abstract}
		%\boldmath
		This work studies and optimizes the performance of reconfigurable intelligent surface (RIS)-aided networks in Nakagami-$m$ fading environment. First, accurate closed-form approximations for the channel distributions are derived. Then, closed-form formulas for the system outage probability, average symbol error probability (ASEP), and the channel capacity are obtained. Furthermore, we provide three different optimization approaches for finding the optimum number of reflecting elements to achieve a target outage probability. At the high signal-to-noise ratio (SNR) regime, a closed-form expression for the asymptotic outage probability is obtained to give more insights into the system performance. Results show that the considered system can achieve a diversity order of $\frac{a+1}{2}$, where $a$ is a function of the Nakagami-$m$ fading parameter $m$ and the number of reflecting elements $N$. Moreover, findings show that $m$ is more influential on the diversity order than $N$. Finally, the achieved expressions are applicable to non-integer values of $m$ and any number of meta-surface elements $N$.
	\end{abstract}

	% Note that keywords are not normally used for peerreview papers.
	\begin{IEEEkeywords}
		Nakagami-$m$ distribution, reconfigurable intelligent surface, accurate channel distributions, optimization.
	\end{IEEEkeywords}

	% For peer review papers, you can put extra information on the cover
	% page as needed:
	% \ifCLASSOPTIONpeerreview
	% \begin{center} \bfseries EDICS Category: 3-BBND \end{center}
	% \fi
	%
	% For peerreview papers, this IEEEtran command inserts a page break and
	% creates the second title. It will be ignored for other modes.
	\IEEEpeerreviewmaketitle

	%%%%%%%%%%%%%%%%%%%%%%%%%%%%%%%%%%%%%%%%%%%%%%%%%%%%%%%%%%%%%%%%%%%%%%%%%%%%%%%%
	%%%%%%%%%%%%%%%%%%%%%%%%%%%%%%%%% Introduction %%%%%%%%%%%%%%%%%%%%%%%%%%%%%%%%%
	%%%%%%%%%%%%%%%%%%%%%%%%%%%%%%%%%%%%%%%%%%%%%%%%%%%%%%%%%%%%%%%%%%%%%%%%%%%%%%%%
\section{Introduction}
Recently, the topic of reconfigurable intelligent surfaces (RISs) gains a lot of attention as an efficient technique for future wireless communication networks. Being an artificial surface made of electromagnetic stuff, RIS can customize the propagation of the radio waves impinging upon it aiming to enhance the signal reception at destination \cite{Renzo1,Alouini1}. 
It enhances the spectrum and energy efficiency of wireless communication with a reasonable cost and complexity. 
	
The RIS technique's main characteristics and its possible applications have been discussed in \cite{Liang}. Basar \textit{et. al} provided in \cite{Basar1} a comprehensive discussion on the key differences of RIS with other techniques, state-of-the-art solutions of RIS, and the critical open research problems in this area of research. The RIS behavior was proven to outperform that of multi-antenna amplify-and-forward (AF) relaying and conventional massive multiple-input multiple-output networks in \cite{Wu}. Recently, \cite{Yang1} investigated the behavior of RIS-assisted hybrid indoor visible light communication-radio frequency (RF) network where both the bit error rate (BER) and outage probability of decode-and-forward (DF) and AF relaying schemes we provided. The secrecy performance of the RIS-aided system was investigated in the presence of eavesdropper and direct path in \cite{Yang5}.
	
Among the available literature on RIS-assisted networks, some works adopted the central limit theorem (CLT) in their derivations. These derivations have been noticed to give reasonable results only for high numbers of meta-surface elements \cite{Yang6}. And even at a high number of reflecting elements, still, the CLT results diverge from the actual behavior of RIS as we go further in increasing the average signal-to-noise ratio (SNR), as will be seen in our results. To solve this problem, the authors in \cite{Yang6} and \cite{Boulogeorgos} have obtained new channel distributions for RIS-aided networks assuming Raleigh fading channels and over Rician fading channels in \cite{AnasWCLs}. The error probability behavior of RIS-assisted network has been considered recently in \cite{Ferreira} under Nakagami-$m$ fading environment. The results were valid only for BPSK and $M$-QAM modulation techniques. Although the authors in \cite{Ferreira} considered Nakagami-$m$ fading channels, only exact expressions for the error probability were derived for a limited number of reflecting elements. In addition, they derived their approximate expressions and bounds for two specific modulation schemes, as mentioned before. Moreover, no insights into the system behavior at high SNR values were provided in that study. 

Motivated by aforementioned work, this work derives the channel distributions of RIS-aided systems adopting Nakagami-$m$ fading model using Laguerre series approach \cite{Primak}. These derived statistics of SNR are then utilized for achieving accurate approximations for the outage probability, average symbol error probability (ASEP), and the channel capacity. The obtained results are applicable to any number of meta-surface elements $N$ and non-integer values of fading parameter $m$. Furthermore, we utilize the outage probability results in formulating an optimization problem where the optimum number of reflecting elements is determined using three various optimization schemes. Moreover, simple expressions are derived for the asymptotic outage probability, diversity, and coding gains. Compared to \cite{AnasWCLs}, which studied the performance of RIS-aided networks over Rician fading channels, we provide here a new optimization problem for finding the optimum number of reflecting elements using two different approaches, in addition to the exact solution. Compared to Rician fading, which fits the case when there is one line-of-sight path between transmitter and receiver plus a number of other paths, Nakagami-$m$ fading is more versatile and can be used to model a large variety of fading channels by changing the value of its shape parameter $m$ in its equation \cite{Alouinib}. The provided results of diversity order and coding gain for RIS-aided networks over Nakagami-$m$ fading channels are new and being presented for the first time in this work.
	
The notation used in this paper is defined as follows: The term $f_{\beta}(x)$ is the probability distribution function (PDF) of random variables (RV) $\beta$, and $F_{\beta}(x)$ is its cumulative distribution function (CDF). The function $\gamma(\cdot,\cdot)$ denotes the lower incomplete gamma function \cite[Eq. (8.350.1)]{Grad.}.  Also, $K_{\nu}(\cdot)$ is the $\nu$-order second type modified Bessel function \cite[Eq. (8.432)]{Grad.}.  $pF_q\left(a_{1},...,a_{p};b_{1},...,b_{q};x\right)$ is the generalized hyper-geometric function \cite[Eq. (9.14.1)]{Grad.}.
	
	%The rest of this paper is organized as follows. Section \ref{SCMs}
	%presents the system and channel models. The performance
	%analysis is evaluated in Section \ref{EPA}. Some
	%simulation and numerical results are discussed in Section
	%\ref{SNRs}. Finally, we conclude in Section \ref{C}.
	
	%demo file is intended to serve as a ``starter file'' for IEEE
	%journal papers produced under \LaTeX\ using IEEEtran.cls version
	%1.7 and later.
	% You must have at least 2 lines in the paragraph with the drop letter
	% (should never be an issue)

	%\hfill mds
	
	%\hfill January 11, 2007
	
	%\subsection{Subsection Heading Here}
	%Subsection text here.
	
	% needed in second column of first page if using \IEEEpubid
	%\IEEEpubidadjcol
	
	%\subsubsection{Subsubsection Heading Here}
	%Subsubsection text here.
	
	%%%%%%%%%%%%%%%%%%%%%%%%%%%%%%%%%%%%%%%%%%%%%%%%%%%%%%%%%%%%%%%%%%%%%%%%%%%%%%%%
	%%%%%%%%%%%%%%%%%%%%%%%%%% System and Channel Models %%%%%%%%%%%%%%%%%%%%%%%%%%%
	%%%%%%%%%%%%%%%%%%%%%%%%%%%%%%%%%%%%%%%%%%%%%%%%%%%%%%%%%%%%%%%%%%%%%%%%%%%%%%%%
	\section{System and Channel Models}\label{SCMs}
	
	For RIS-aided systems in \cite{Alouini1}, the maximized end-to-end (e2e) SNR can be expressed as 
	\begin{equation} \label{eq:gamma}
		\gamma =\frac {\left ({\sum _{i=1}^{N} \alpha _{i} \beta _{i} }\right)^{2} E_{s} }{N_{0}}= \bar{\gamma}Z^{2},
	\end{equation}
	where $\alpha_{i}\,(i=1,2,\ldots,N)$ are the first hop channel envelops and assumed to be independent and identically distributed (i.i.d.) Nakagami-$m$ random variables (RVs) each of shape parameter $m_1$ and scale parameter $\Omega_1$, and $\beta_{i}\,(i=1,2,\ldots,N)$ are the second hop channel envelops and also assumed to be i.i.d. Nakagami-$m$ RVs each with shape parameter $m_2$ and scale parameter $\Omega_2$. Moreover, $\alpha_{i}$ and $\beta_{i}$ are independent. $E_{s}$ is the average power of the transmitted signal, $N_{0}$ is the power of zero mean additive white Gaussian noise term, $N$ is the number of reflecting elements, and $\bar{\gamma}$ is the average SNR. For more details on the system and channel models, one can refer to \cite{Alouini1}. In the following, some statistical characterizations of $\gamma$ are presented.
	
	\begin{theorem} Assuming Nakagami-$m$ fading model, the probability density function (PDF) and the cumulative distribution function (CDF) of $\gamma$ can be, respectively, given by
		\begin{equation}\label{eq:PDFgamma}
			f_{\gamma}(\gamma)\simeq\frac{\left(\frac{\gamma }{\bar{\gamma}}\right)^{\frac{a}{2}} \exp \left(-\frac{\sqrt{\frac{\gamma }{\bar{\gamma}}}}{b}\right)}{2 b^{a+1} \Gamma (a+1) \sqrt{\bar{\gamma} \gamma }},
		\end{equation}
		and
		\begin{equation}\label{eq:CDFgamma}
			F_{\gamma}(\gamma)\simeq \frac{\gamma \left(a+1,\frac{\sqrt{\frac{\gamma }{\bar{\gamma}}}}{b}\right)}{\Gamma (a+1)},
			%F_{\gamma}(\gamma)\simeq 1-\frac{\Gamma \left(a+1,\frac{\sqrt{\frac{\gamma }{\bar{\gamma}}}}{b}\right)}{\Gamma (a+1)},
		\end{equation}
		where
		
				$a=\frac{m_1 m_2 N \Gamma (m_1)^2 \Gamma (m_2)^2}{m_1 m_2 \Gamma (m_1)^2 \Gamma (m_2)^2-\Gamma \left(m_1+\frac{1}{2}\right)^2 \Gamma \left(m_2+\frac{1}{2}\right)^2}-N-1,$ and
		$b=\frac{m_1 m_2 \Gamma (m_1)^2 \Gamma (m_2)^2-\Gamma \left(m_1+\frac{1}{2}\right)^2 \Gamma \left(m_2+\frac{1}{2}\right)^2}{\sqrt{\frac{m_1}{\Omega_1}} \Gamma (m_1) \Gamma \left(m_1+\frac{1}{2}\right) \sqrt{\frac{m_2}{\Omega_2}} \Gamma (m_2) \Gamma \left(m_2+\frac{1}{2}\right)}.$
	\end{theorem}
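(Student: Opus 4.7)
The plan is to invoke the Laguerre series approach of Primak \textit{et al.}, which approximates the PDF of a non-negative random variable by a gamma density whose shape and scale parameters are fixed by matching the first two moments; higher-order Laguerre coefficients are dropped. Concretely, I will first derive a gamma-type approximation for the sum $Z=\sum_{i=1}^{N}\alpha_i\beta_i$, and only then push it through the deterministic map $\gamma=\bar\gamma Z^{2}$. The reason for attacking $Z$ directly (rather than $Z^{2}$) is that $Z$ is a sum of i.i.d.\ nonnegative terms, so its moments factor cleanly into single-variable Nakagami-$m$ moments, whereas working with $Z^{2}$ would produce cross terms that complicate the moment matching without adding accuracy.

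The computational core is therefore the evaluation of $\mu=\mathbb{E}[Z]$ and $\sigma^{2}=\mathrm{Var}(Z)$. Using independence of the $\alpha_i$'s and $\beta_i$'s together with the standard Nakagami-$m$ moment formula $\mathbb{E}[\alpha^{k}]=\frac{\Gamma(m+k/2)}{\Gamma(m)}(\Omega/m)^{k/2}$, I would write
\begin{equation*}
\mu=N\,\frac{\Gamma(m_1+\tfrac12)\Gamma(m_2+\tfrac12)}{\Gamma(m_1)\Gamma(m_2)}\sqrt{\tfrac{\Omega_1\Omega_2}{m_1 m_2}},\qquad
\sigma^{2}=N\bigl(\Omega_1\Omega_2-\mathbb{E}[\alpha]^{2}\mathbb{E}[\beta]^{2}\bigr),
\end{equation*}
since $\mathbb{E}[\alpha^{2}]=\Omega_1$, $\mathbb{E}[\beta^{2}]=\Omega_2$. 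Setting the gamma-approximation parameters by the usual Laguerre/moment-matching prescription $a=\mu^{2}/\sigma^{2}-1$ and $b=\sigma^{2}/\mu$ and simplifying using the abbreviations $A=\Gamma(m_1+\tfrac12)^{2}\Gamma(m_2+\tfrac12)^{2}$ and $B=m_1 m_2\,\Gamma(m_1)^{2}\Gamma(m_2)^{2}$ will yield exactly the $a$ and $b$ stated in the theorem. With these constants fixed, the Laguerre approximation asserts
\begin{equation*}
f_{Z}(z)\simeq\frac{z^{a}\,e^{-z/b}}{b^{a+1}\Gamma(a+1)},\qquad
F_{Z}(z)\simeq\frac{\gamma\!\left(a+1,z/b\right)}{\Gamma(a+1)}.
\end{equation*}

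The last step is a routine change of variables. Since $\gamma=\bar\gamma Z^{2}$ is monotone on $Z\ge 0$, the CDF transforms directly as $F_{\gamma}(\gamma)=F_{Z}\!\bigl(\sqrt{\gamma/\bar\gamma}\,\bigr)$, which reproduces \eqref{eq:CDFgamma}, and differentiation (or equivalently the Jacobian factor $1/(2\sqrt{\bar\gamma\gamma})$) immediately gives \eqref{eq:PDFgamma}. The main obstacle is not the moment computation itself but the algebraic manipulation: reducing $\mu^{2}/\sigma^{2}-1$ and $\sigma^{2}/\mu$ into the compact symmetric form involving $A$ and $B$ requires carefully grouping the Nakagami moments so that $\Omega_1,\Omega_2$ cancel out of $a$ and appear only through the square-root factors in $b$. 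A secondary (conceptual) subtlety is justifying why retaining only the leading Laguerre term gives a good approximation; this is exactly the argument originally given in \cite{Primak} and is inherited here without modification, which is also why the result is stated with $\simeq$ rather than equality.
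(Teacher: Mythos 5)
Your proposal is correct and follows essentially the same route as the paper: moment-matching the sum $Z=\sum_i\alpha_i\beta_i$ to the first term of a Laguerre expansion via $a=\mu^2/\sigma^2-1$, $b=\sigma^2/\mu$, and then transforming through $\gamma=\bar\gamma Z^2$. Your variance expression $\sigma^2=N\bigl(\Omega_1\Omega_2-\mathbb{E}[\alpha]^2\mathbb{E}[\beta]^2\bigr)$ is in fact the correct one consistent with the stated $a$ and $b$ (the paper's displayed $\mathrm{Var}(Z_i)$ omits an $m_1m_2$ factor in the denominator, an apparent typo), so no gap remains.
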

	
	\begin{proof}
		Let $Z_i=\alpha _{i} \beta _{i} $ be the product of two Nakagami-$m$ RVs. The PDF of $Z_i$ is given by
		\begin{equation}\label{eq:PDFdoubleNaka}
f_{Z_i}(z)=\frac{4(\frac{m_1m_2}{\Omega_1\Omega_2})^{2(m_1+m_2)}}{\Gamma(m_1)\Gamma(m_2)} v^{m_1+m_2-1} K_{m_1-m_2}\left(2\sqrt{\frac{m_1}{\Omega_1}}\sqrt{\frac{m_2}{\Omega_2}} z\right).
\end{equation}

		Moreover, the mean and variance of $Z_{i}$ are, respectively given by
		\begin{equation}
			E(Z_i)=\frac{\Gamma \left(m_1+\frac{1}{2}\right) \Gamma \left(m_2+\frac{1}{2}\right)}{\sqrt{\frac{m_1}{\Omega_1}} \Gamma (m_1) \sqrt{\frac{m_2}{\Omega_2}} \Gamma (m_2)},
		\end{equation}
		and
		\begin{equation}
			Var(Z_i)=\Omega_1 \Omega_2 \left(1-\frac{\Gamma \left(m_1+\frac{1}{2}\right)^2 \Gamma \left(m_2+\frac{1}{2}\right)^2}{\Gamma (m_1)^2 \Gamma (m_2)^2}\right).
		\end{equation}
		Now, according to \cite[Sec. 2.2.2]{Primak}, the PDF of $Z=\sum _{i=1}^{N} Z_{i}$ can be tightly approximated by the first term of a Laguerre expansion. The values of $a$ and $b$ are related to the mean and variance of $Z$ as
		\begin{equation}\label{eq:paramA}
			a=\frac{\left(E(Z)\right)^2}{Var(Z)}-1,
		\end{equation}
		and
		\begin{equation}\label{eq:paramB}
			b=\frac{Var(Z)}{E(Z)},
		\end{equation}
		where $E(Z)=N E(Z_i)$ and $Var(Z)=N Var(Z_i)$. Therefore, we formulate the PDF of $Z$ such as
		\begin{equation}\label{eq:PDFofZ}
			f_{Z}(z)=\frac{z^{a}}{b^{a+1} \Gamma (a+1)} \exp \left(-\frac{z}{b}\right).
		\end{equation}
		Upon performing RV transformation, the PDF in \eqref{eq:PDFgamma} is achieved. Accordingly, the CDF is found using
		\begin{equation}\label{eq:CDFdef}
			F_{\gamma}(\gamma)=\int_{0}^{\gamma}f_{\gamma}(x) dx.
		\end{equation}
		The integral in \eqref{eq:CDFdef} is evaluated by substitution and then by employing \cite[Eq. (8.350.1)]{Grad.}. This concludes the proof.
	\end{proof}
	
	%%%%%%%%%%%%%%%%%%%%%%%%%%%%%%%%%%%%%%%%%%%%%%%%%%%%%%%%%%%%%%%%%%%%%%%%%%%%%%%%
	%%%%%%%%%%%%%%%%%%%%%%%%%%% Performance Analysis %%%%%%%%%%%%%%%%%%%%%%%%%%%%%%%
	%%%%%%%%%%%%%%%%%%%%%%%%%%%%%%%%%%%%%%%%%%%%%%%%%%%%%%%%%%%%%%%%%%%%%%%%%%%%%%%%
	\section{Performance Analysis}\label{EPA}
	
	%%%%%%%%%%%%%%%%%%%%%%%%%%%%%%%%%%%%%%%%%%%%%%%%%%%%%%%%%%%%%%%%%%%%%%%
	%%%%%%%%%%%%%%%%%%%%%%%% Outage Probability %%%%%%%%%%%%%%%%%%%%%%%%%%%
	%%%%%%%%%%%%%%%%%%%%%%%%%%%%%%%%%%%%%%%%%%%%%%%%%%%%%%%%%%%%%%%%%%%%%%%
	\subsection{Outage Probability}
	The system outage probability is defined as the event where the e2e SNR goes below a predetermined outage threshold $\gamma_{\sf out}$. Mathematically speaking $P_{\sf out}=\mathrm{Pr}\left[\gamma\leq\gamma_{\sf out}\right]$,
	where $\mathrm{Pr}[.]$ is the probability operation. Thus, using (\ref{eq:CDFgamma}), we have
	\begin{equation}\label{eq:Fout}
		P_{\sf out}=F_{\gamma}(\gamma_{\sf out}).
	\end{equation}
	
	To derive simpler expressions and get more insights into the impact of various system parameters on the behavior, we derive a simple but accurate expression for the outage probability at high values of SNR ($\bar{\gamma}\rightarrow\infty$). In this region, the outage probability can expressed as $P_{\sf out}=(G_{c}\bar{\gamma})^{-G_{d}}$, where $G_{d}$ is the gain in diversity and $G_{c}$ is the gain in coding \cite{Alouinib}. Using \cite[Eq. (8.354.1)]{Grad.}, we get
	\begin{equation}\label{eq:AsympOut1}
		P_{\sf out}^{\infty}\simeq \frac{\sum_{n=0}^{\infty}\frac{(-1)^{n}\left(\sqrt{\frac{\gamma_{\sf out}}{\bar{\gamma}}}\right)^{a+n+1}}{(a+n+1)b^{a+n+1}}}{\Gamma(a+1)}.
	\end{equation}
	As $\bar{\gamma}\rightarrow\infty$, the expression in \eqref{eq:AsympOut1} is only dominated by the first term in summation. Upon considering that, we get
	\begin{equation}\label{eq:AsympOut}
		P_{\sf out}^{\infty}\simeq \left[\frac{b^{2}}{\gamma_{\sf out}[(a+1)!]^{-\frac{2}{(a+1)}}}\bar{\gamma}\right]^{-\frac{(a+1)}{2}}.
	\end{equation}
	
	From \eqref{eq:AsympOut}, it can be noticed that the coding gain is $G_{c}=\frac{b^{2}}{\gamma_{\sf out}[(a+1)!]^{-\frac{2}{(a+1)}}}$ and the diversity order is $G_{d}=\frac{a+1}{2}$. In addition, for Rayleigh fading channels, $G_{d}$ simplifies to $\frac{\pi ^2 N}{32-2 \pi ^2}$. It is worthwhile to mention that, \cite{Yang6} approximated $G_{d}$ of RIS networks as that of MIMO channels $N-1<G_{d}<N$. Here, we show that this only applies when $N$ takes a value equal to 5 or less, but when $N$ is greater than 5, $G_{d}$ could get below the floor $N-1$.

	%%%%%%%%%%%%%%%%%%%%%%%%%%%%%%%%%%%%%%%%%%%%%%%%%%%%%%%%%%%%%%%%%%%%%%%
	%%%%%%%%%%%%%%%%%%%%% Symbol Error Probability %%%%%%%%%%%%%%%%%%%%%%%%
	%%%%%%%%%%%%%%%%%%%%%%%%%%%%%%%%%%%%%%%%%%%%%%%%%%%%%%%%%%%%%%%%%%%%%%%
	\subsection{Average Symbol Error Probability}
	Using the CDF of e2e SNR, the average symbol error probability (ASEP) can be obtained as follows \cite{McKay}
	\begin{align}\label{eq:ASEP}
		\mathrm{ASEP}=\frac{p\sqrt{q}}{2\sqrt{\pi}}\int_{0}^{\infty}\frac{\exp\left(-q\gamma\right)}{\sqrt{\gamma}}F_{\gamma}(\gamma) d\gamma,
	\end{align}
	where $p$ and $q$ are constants representing the type of modulation. 
	
	\begin{theorem}
		For RIS-aided network under Nakagami-$m$ fading environment, the ASEP can be achieved as in (\ref{eq:ASEPpq}), shown at the top of this page.
			\begin{figure*}
		\footnotesize
		\begin{equation}\label{eq:ASEPpq}
			\begin{split}
				P_{e}=\frac{p q^{-\frac{a}{2}-1} \left(\frac{1}{b \sqrt{\bar{\gamma}}}\right)^a \left((a+2) b \sqrt{\bar{\gamma}} \sqrt{q} \Gamma \left(\frac{a}{2}+1\right) \, _2F_2\left(\frac{a}{2}+\frac{1}{2},\frac{a}{2}+1;\frac{1}{2},\frac{a}{2}+\frac{3}{2};\frac{1}{4 b^2 \bar{\gamma} q}\right)-(a+1) \Gamma \left(\frac{a+3}{2}\right) \, _2F_2\left(\frac{a}{2}+1,\frac{a}{2}+\frac{3}{2};\frac{3}{2},\frac{a}{2}+2;\frac{1}{4 b^2 \bar{\gamma} q}\right)\right)}{2 \sqrt{\pi } (a+1) (a+2) b^2 \bar{\gamma} \Gamma (a+1)},
			\end{split}
		\end{equation}
		\normalsize
		\hrule height 0.8pt
	\end{figure*}
		
		%\begin{figure*}
		%\begin{equation}\label{eq:ASEPpq}
		%\begin{split}
		%P_{e}=\frac{b^{-a-2} (\bar{\gamma} q)^{-\frac{a}{2}}}{2 \Gamma (p)} \Biggl(\frac{b \Gamma \left(\frac{a+1}{2}+p\right) \, _2F_2\left(\frac{a}{2}+\frac{1}{2},\frac{a}{2}+p+\frac{1}{2};\frac{1}{2},\frac{a}{2}+\frac{3}{2};\frac{1}{4 b^2 \bar{\gamma} q}\right)}{\Gamma (a+2) \sqrt{\bar{\gamma} q}}\\
		%-\frac{(a+1) \Gamma \left(\frac{a}{2}+p+1\right) \, _2F_2\left(\frac{a}{2}+1,\frac{a}{2}+p+1;\frac{3}{2},\frac{a}{2}+2;\frac{1}{4 b^2 \bar{\gamma} q}\right)}{\bar{\gamma} q \Gamma (a+3)}\Biggr).
		%\end{split}
		%\end{equation}
		%\normalsize
		%\hrule height 0.8pt
		%\end{figure*}
	\end{theorem}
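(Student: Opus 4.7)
The plan is to start from the defining integral (\ref{eq:ASEP}) for the ASEP and substitute the closed-form CDF $F_\gamma(\gamma)$ from (\ref{eq:CDFgamma}) provided by Theorem~1. This yields
\begin{equation*}
\mathrm{ASEP} = \frac{p\sqrt{q}}{2\sqrt{\pi}\,\Gamma(a+1)} \int_0^\infty \frac{e^{-q\gamma}}{\sqrt{\gamma}}\, \gamma\!\left(a+1,\tfrac{\sqrt{\gamma/\bar{\gamma}}}{b}\right) d\gamma.
\end{equation*}
First I would apply the substitution $u = \sqrt{\gamma/\bar{\gamma}}$ (so $d\gamma = 2\bar{\gamma} u\,du$ and $\sqrt{\gamma} = \sqrt{\bar{\gamma}}\,u$), which kills the awkward $\sqrt{\gamma}$ factor and converts the exponential into a Gaussian in $u$. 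The integral reduces to the canonical form $\int_0^\infty e^{-q\bar{\gamma} u^2}\,\gamma(a+1,u/b)\,du$, a standard target whose value is tabulated in terms of generalized hypergeometric functions.

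Next I would expand the lower incomplete gamma function using its Maclaurin series $\gamma(s,x) = \sum_{k=0}^\infty \frac{(-1)^k x^{s+k}}{k!\,(s+k)}$, interchange sum and integral (justified by absolute convergence inside the exponential bell), and evaluate each remaining term as a Gaussian moment $\int_0^\infty u^{a+k+1} e^{-q\bar{\gamma} u^2} du = \tfrac{1}{2}(q\bar{\gamma})^{-(a+k+2)/2}\Gamma\!\left(\tfrac{a+k+2}{2}\right)$. Splitting the resulting series into even-$k$ and odd-$k$ subseries (because the Gamma-function argument has different half-integer parities) and matching the two reindexed power series to the definition $_2F_2(\alpha_1,\alpha_2;\beta_1,\beta_2;z) = \sum_n \tfrac{(\alpha_1)_n(\alpha_2)_n}{(\beta_1)_n(\beta_2)_n}\tfrac{z^n}{n!}$ would produce precisely the two $_2F_2$ factors appearing in (\ref{eq:ASEPpq}), each multiplied by a Gamma-function prefactor corresponding to the $k=0$ term of its subseries. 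A slicker alternative would be to invoke a tabulated integral from \cite{Grad.} for $\int_0^\infty e^{-\mu x^2}\gamma(\nu,\lambda x)\,dx$ directly, bypassing the series manipulation.

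The main obstacle will be the bookkeeping needed to split the single power series into the two $_2F_2$ functions with shifted parameters and to collect the prefactors $\Gamma\!\left(\tfrac{a}{2}+1\right)$ and $\Gamma\!\left(\tfrac{a+3}{2}\right)$ together with the correct powers of $b\sqrt{\bar{\gamma}}$, $q$, and the $(a+1)(a+2)$ denominator. Once this separation is performed and the Pochhammer symbols are verified against the resulting shifted Gamma ratios, the final algebraic rearrangement reproduces (\ref{eq:ASEPpq}) and concludes the proof.
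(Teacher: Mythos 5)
Your proposal is correct and I verified that the even/odd split you describe does reproduce (\ref{eq:ASEPpq}) exactly: after the substitution $u=\sqrt{\gamma/\bar{\gamma}}$ the integral becomes $\tfrac{p\sqrt{q}\sqrt{\bar{\gamma}}}{\sqrt{\pi}\,\Gamma(a+1)}\int_0^\infty e^{-q\bar{\gamma}u^2}\gamma(a+1,u/b)\,du$, termwise integration gives $\tfrac{1}{2}(q\bar{\gamma})^{-(a+k+2)/2}\Gamma(\tfrac{a+k+2}{2})$, and the identities $(2n)!=4^n n!(\tfrac12)_n$, $(2n+1)!=4^n n!(\tfrac32)_n$ and $\tfrac{1}{a+1+2n}=\tfrac{1}{a+1}\tfrac{(\frac{a+1}{2})_n}{(\frac{a+3}{2})_n}$ produce precisely the two $\,_2F_2$ functions with the stated prefactors $\Gamma(\tfrac{a}{2}+1)/(a+1)$ and $\Gamma(\tfrac{a+3}{2})/(a+2)$. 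However, your route differs from the paper's in the final evaluation step: the paper's proof is a terse recipe that, after substitution, integrates by parts, substitutes again, and then closes the computation by citing the tabulated relation \cite[Eq. 06.25.21.0131.01]{Wolfram} for an integral of the incomplete gamma function, whereas you derive the $\,_2F_2$ representation from first principles by expanding $\gamma(a+1,x)$ in its Maclaurin series, integrating term by term against the Gaussian kernel, and resumming the even- and odd-index subseries. What your version buys is a self-contained, checkable derivation that does not depend on an external table entry (and makes the dominated-convergence justification of the interchange explicit); what the paper's version buys is brevity. Your suggested "slicker alternative" of invoking a tabulated integral for $\int_0^\infty e^{-\mu x^2}\gamma(\nu,\lambda x)\,dx$ is essentially the paper's actual route. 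No gap.
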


	\begin{proof}
		Upon inserting (\ref{eq:CDFgamma}) in (\ref{eq:ASEP}), evaluating the integral by substitution, then by parts, then by substitution again, and then utilizing the relation \cite[Eq. 06.25.21.0131.01]{Wolfram}, the result given in (\ref{eq:ASEPpq}) is achieved. 
	\end{proof}
	
	%%%%%%%%%%%%%%%%%%%%%%%%%%%%%%%%%%%%%%%%%%%%%%%%%%%%%%%%%%%%%%%%%%%%%%%
	%%%%%%%%%%%%%%%%%%%%% Average Channel Capacity %%%%%%%%%%%%%%%%%%%%%%%%
	%%%%%%%%%%%%%%%%%%%%%%%%%%%%%%%%%%%%%%%%%%%%%%%%%%%%%%%%%%%%%%%%%%%%%%%
	\subsection{Channel Capacity}
	Using the PDF of e2e SNR, the channel capacity can be obtained as follows \cite{Vellakudiyan}
	\begin{align}\label{eq:ACC}
		C=(1/\ln(2))\int_{0}^{\infty} \ln(1+\gamma) f_{\gamma}(\gamma) d\gamma.
	\end{align}
	%\mathrm{C}=\frac{1}{\ln{2}}\int_{0}^{\infty} (1+\gamma)^{-1} \left(1-F_{\gamma}(\gamma)\right) d\gamma.
	
	\begin{theorem}
		For RIS-aided network under Nakagami-$m$ fading environment, the channel capacity  can be given by (\ref{eq:ACCgamma}).
	\end{theorem}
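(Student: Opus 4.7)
The plan is to evaluate (\ref{eq:ACC}) directly by inserting the PDF from Theorem~1, reducing it to a one-dimensional integral via a change of variable, and then closing it in Meijer G-function form. First, I would substitute (\ref{eq:PDFgamma}) into (\ref{eq:ACC}) and apply the change of variable $u=\sqrt{\gamma/\bar{\gamma}}$, so that $\gamma=\bar{\gamma}u^{2}$ and $d\gamma=2\bar{\gamma}u\,du$. The factor $(\gamma/\bar{\gamma})^{a/2}/\sqrt{\bar{\gamma}\gamma}$ together with the Jacobian collapses to a clean monomial $u^{a}$, giving
\begin{equation*}
C=\frac{1}{\ln(2)\,b^{a+1}\Gamma(a+1)}\int_{0}^{\infty}u^{a}\,\ln(1+\bar{\gamma}u^{2})\,\exp(-u/b)\,du.
\end{equation*}
The whole problem thereby reduces to evaluating this integral of the product of a power, a logarithm of $1+\bar{\gamma}u^{2}$, and a decaying exponential.

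Second, I would rewrite both transcendental factors in Meijer G-function form using the standard representations of $\ln(1+x)$ and $\exp(-x)$, and then apply the integral identity for a product of two Meijer G-functions whose arguments are powers of the integration variable (e.g., from the Wolfram functions site or \cite{Grad.}). Because the argument of the logarithm contains $u^{2}$ while the exponential contains $u^{1}$, combining the two G-functions requires the Gauss--Legendre multiplication theorem for the Gamma function; this introduces auxiliary parameter pairs of the form $\{k/2,(k+1)/2\}$, which is precisely the pattern already visible in the $_2F_2$ arguments of (\ref{eq:ASEPpq}). After this combination the $u$-integral collapses to a single Meijer G-function, equivalently to a linear combination of generalized hypergeometric functions in an argument built from $b$, $\bar{\gamma}$, and $a$. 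Substituting the explicit expressions for $a$ and $b$ from Theorem~1 then produces (\ref{eq:ACCgamma}).

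The hard part will be the bookkeeping of the parameter vectors $\{a_i\}$, $\{b_j\}$ and of the overall prefactor $1/[\ln(2)\,b^{a+1}\Gamma(a+1)]$ through the duplication step; index shifts and sign conventions for the upper/lower parameter lists of the resulting G-function are the most error-prone places. I would sanity-check the final closed form numerically against a direct numerical quadrature of (\ref{eq:ACC}) for several representative tuples $(m_{1},m_{2},N,\bar{\gamma})$, including a non-integer choice of $m$ to verify that the expression remains valid outside the integer regime, before declaring the proof complete. Once the match is obtained, the remaining manipulations are purely mechanical rearrangements to present the answer in the form displayed in (\ref{eq:ACCgamma}).
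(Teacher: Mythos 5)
Your proposal is correct, and it reaches \eqref{eq:ACCgamma} by a genuinely different mechanism than the paper. The paper's proof keeps the integral in the variable $\gamma$, replaces the logarithm via the hypergeometric representation $\ln(1+y)=y\,_2F_1(1,1;2;-y)$ in \eqref{eq:logarithm}, and then closes the resulting Laplace-type integral of a $\,_2F_1$ against $\gamma^{(a-1)/2}\exp(-\sqrt{\gamma/\bar{\gamma}}/b)$ in one step by citing the tabulated identity \cite[Eq. 07.23.21.0015.01]{Wolfram}. You instead change variables to $u=\sqrt{\gamma/\bar{\gamma}}$ first (your reduction to $\frac{1}{\ln(2)\,b^{a+1}\Gamma(a+1)}\int_{0}^{\infty}u^{a}\ln(1+\bar{\gamma}u^{2})e^{-u/b}\,du$ is algebraically correct), then represent both $\ln(1+\bar{\gamma}u^{2})$ and $e^{-u/b}$ as Meijer $G$-functions and invoke the product-integration theorem with the Gauss--Legendre duplication step to reconcile the $u^{2}$ versus $u$ arguments. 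The two routes are close cousins: the table entry the paper cites is itself a packaged instance of exactly the Mellin--Barnes/duplication computation you propose to do by hand, and the telltale argument $-1/(4b^{2}\bar{\gamma})$ and the half-integer parameter pairs in the $\,_2F_3$ and $\,_1F_2$ terms of \eqref{eq:ACCgamma} (as well as the $\csc$, $\sec$, $\ln$, and $\psi^{(0)}$ terms, which arise from coincident poles in the residue expansion) confirm that the duplication step is where the structure comes from. What the paper's approach buys is brevity --- one lookup after \eqref{eq:logarithm}; what yours buys is self-containedness and transparency about where each term of \eqref{eq:ACCgamma} originates, at the cost of the parameter bookkeeping you correctly flag as the error-prone part. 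Your plan to validate numerically against direct quadrature of \eqref{eq:ACC}, including non-integer $m$, is an appropriate safeguard; there is no gap in the argument.
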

		\begin{figure*}
		\scriptsize
		\begin{equation}\label{eq:ACCgamma}
			\begin{split}
				C=\frac{1}{\ln (2) \Gamma (a+1)}\Biggl(\frac{\Gamma (a-1) \, _2F_3\left(1,1;2,1-\frac{a}{2},\frac{3}{2}-\frac{a}{2};-\frac{1}{4 b^2 \bar{\gamma}}\right)}{b^2 \bar{\gamma}}
				+\frac{\pi  b^{-a-2} \bar{\gamma}^{-\frac{a}{2}-1} \csc \left(\frac{\pi  a}{2}\right) \, _1F_2\left(\frac{a}{2}+1;\frac{3}{2},\frac{a}{2}+2;-\frac{1}{4 b^2 \bar{\gamma}}\right)}{a+2}\\
				+\frac{\pi  b^{-a-1} \bar{\gamma}^{-\frac{a}{2}-\frac{1}{2}} \sec \left(\frac{\pi  a}{2}\right) \, _1F_2\left(\frac{a}{2}+\frac{1}{2};\frac{1}{2},\frac{a}{2}+\frac{3}{2};-\frac{1}{4 b^2 \bar{\gamma}}\right)}{a+1}
				-2 a^2 \Gamma (a-1) \ln \left(\frac{1}{b \sqrt{\bar{\gamma}}}\right)+2 a \Gamma (a-1) \ln \left(\frac{1}{b \sqrt{\bar{\gamma}}}\right)
				+2 (a-1) a \Gamma (a-1) \psi ^{(0)}(a+1)\Biggr),
			\end{split}
		\end{equation}
		where $\psi^{0}(.)$ gives the $0^{\mathrm{th}}$ polygamma function, which is the $0^{\mathrm{th}}$ derivative of the digamma function.\\
		\normalsize
		\hrule height 0.8pt
	\end{figure*}
	\begin{proof}
		Upon inserting (\ref{eq:PDFgamma}) in (\ref{eq:ACC}), and then utilizing the following representation of the natural logarithm function \cite{Wikipedia}
		\begin{equation}\label{eq:logarithm}
			\ln(1+y)=y_2F_1(1,1;2;-y),
		\end{equation}
		and then using \cite[Eq. 07.23.21.0015.01]{Wolfram}, the result given in (\ref{eq:ACCgamma}) is achieved.
	\end{proof}
	
	%%%%%%%%%%%%%%%%%%%%%%%%%%%%%%%%%%%%%%%%%%%%%%%%%%%%%%%%%%%%%%%%%%%%%%%%%%%%%%%%
	%%%%%%%%%%%%%%%%%%%%%%%%%%% Performance Analysis %%%%%%%%%%%%%%%%%%%%%%%%%%%%%%%
	%%%%%%%%%%%%%%%%%%%%%%%%%%%%%%%%%%%%%%%%%%%%%%%%%%%%%%%%%%%%%%%%%%%%%%%%%%%%%%%%
	\section{Optimum Number of Reflecting Elements}\label{ONRE}
	Here, we derive a closed-form expression for the optimum number of RIS elements $N^{\rm Opt}$ using the derived outage probability in \eqref{eq:Fout}. Since this expression is not convex with respect to number of RIS elements $N$, we assume here that $N^{\rm Opt}$ could be obtained for a predetermined outage performance $P_{\sf out}^{\rm th}$. Hence, the optimization problem can be expressed as 
	\begin{align}
		\min_{N} P_{\sf out} (N) \qquad   \text{s.t.} \ \ P_{\sf out} \leq P_{\sf out}^{\rm th}, \ \ 1\leq N \leq N_{\rm max}. 
	\end{align} 
	Such technique has been adopted in \cite{Larsson2020} for minimum rate. By substituting (3), (4), and (5) in \eqref{opt2}, we obtain
	\begin{align} \label{opt2}
		\min_{N} \frac{\gamma\left(a +1, \frac{\sqrt{\gamma_{\sf out}/ \bar{\gamma}}}{b} \right)}{\Gamma(a+1)} \qquad \text{s.t.} \ \ P_{\sf out} \leq P_{\sf out}^{\rm th}, \ \ 1\leq N \leq N_{\rm max}. 
	\end{align} 
	With the incomplete gamma function upper bound provided in \cite{Jameson2016}, the expression in \eqref{opt2} can be further simplified as
	\begin{align} \label{logapp1}
		\frac{\gamma\left(a +1, \frac{\sqrt{\gamma_{\sf out}/ \bar{\gamma}}}{b} \right)}{\Gamma(a+1)} \leq \frac{\exp(a+1) \left( \frac{\sqrt{\gamma_{\sf out}/ \bar{\gamma}}}{b} \right)^{(a+1)} }{(a+1)^{(a+1)}}, 
	\end{align} 
	and hence \eqref{opt2} can be solved by solving the following 
	\begin{align}
		\frac{\exp(a+1) \left( \frac{\sqrt{\gamma_{\sf out}/ \bar{\gamma}}}{b} \right)^{(a+1)} }{(a+1)^{(a+1)}} = P_{\sf out}^{\rm th},
	\end{align} 
	then, by taking the natural logarithmic of both sides, we get
	\begin{align} \label{logapp}
		&(a+1) + (a+1) \ln \left( \frac{\sqrt{\gamma_{\sf out}/ \bar{\gamma}}}{b} \right) - (a+1) \ln(a+1) = \ln (P_{\sf out}^{\rm th}).
	\end{align}
	This equation can be solved by using any software package such as Matlab. We call this solution as the logarithmic approximation approach as will be discussed later. 
	
	A further simplification to \eqref{logapp} can be done by representing the logarithmic equation by the quadratic polynomial. This can be done by using the curve fitting method where the term $(a+1) \ln(a+1)$ can be approximately expressed by a second degree polynomial as follows
	\begin{align}
		(a+1) \ln(a+1) \approx 0.001248 (a+1)^2 + 5.825 (a+1) -131.4. 
	\end{align}
	
	\begin{figure*}
		\begin{equation}\label{quadapp}
			(a + 1) = \frac{-\left[  4.825 -\ln \left( \frac{\sqrt{\gamma_{\sf out}/ \bar{\gamma}}}{b} \right) \right] \pm \sqrt{\left[  4.825 -\ln \left( \frac{\sqrt{\gamma_{\sf out}/ \bar{\gamma}}}{b} \right) \right]^2 - 4 \times 001248 \times \left[  -131.4 + \ln (P_{\sf out}^{\rm th}) \right] }}{2 \times 0.001248}.
		\end{equation}
		\hrule height 0.8pt
	\end{figure*}

	Then, the problem in \eqref{logapp} can be expressed as
	\begin{align}
		&	(a+1) + (a+1) \ln \left( \frac{\sqrt{\gamma_{\sf out}/ \bar{\gamma}}}{b} \right) - 0.001248 (a+1)^2 - 5.825 (a+1) +131.4 = \ln (P_{\sf out}^{\rm th}),  \\
		&	0.001248 (a+1)^2 + \left[  4.825 -\ln \left( \frac{\sqrt{\gamma_{\sf out}/ \bar{\gamma}}}{b} \right) \right] (a+1)-131.4 + \ln (P_{\sf out}^{\rm th}) = 0.
	\end{align}
	This equation can be solved by using the formula of solving quadratic polynomial equation as shown in \eqref{quadapp} on the top of this page. Hence, the value of $N^{\rm Opt}$ can be evaluated from $a$.

	%%%%%%%%%%%%%%%%%%%%%%%%%%%%%%%%%%%%%%%%%%%%%%%%%%%%%%%%%%%%%%%%%%%%%%%%%%%%%%%%%%%%%%%%%%%%%%%%%%
	%%%%%%%%%%%%%%%%%%%%%%%%%%%%%% Simulation and Numerical Results %%%%%%%%%%%%%%%%%%%%%%%%%%%%%%%%%%
	%%%%%%%%%%%%%%%%%%%%%%%%%%%%%%%%%%%%%%%%%%%%%%%%%%%%%%%%%%%%%%%%%%%%%%%%%%%%%%%%%%%%%%%%%%%%%%%%%%
	\section{Simulation and Numerical Results}\label{SNRs}
	Here, we assume $\Omega_{1}=\Omega_{2}=\Omega$ and $m_{1}=m_{2}=m$ in the first 4 figures. We are also utilizing ($p=q=1$) in the ASEP results, which represent BPSK modulation.
	
	Fig. \ref{Pout_SNR_N} validates the derived outage probability expression where obviously a good matching happens between the derived analytical and asymptotic results with simulations, which validates the followed derivation approach. Moreover, we can notice from this figure that the number of meta-surface elements $N$ is clearly impacting the diversity order of the system $G_{d}$, which matches the achieved asymptotic results in Section \ref{EPA}. In terms of saving power, utilizing $N=10$ instead of $N=5$ to achieve an outage probability of $10^{-4}$ can save an amount of almost 11 dB in the SNR. The figure also compares our results with the CLT-based approach where it is obvious that the CLT-based results diverge from the simulation and our results. This is an indication on the inaccuracy of the CLT-based approach in deriving channel distribution of RIS-assisted networks. 
	
	The impact of fading parameter $m$ on the ASEP performance is studied in Fig. \ref{BER}. Obviously, $m$ is affecting the diversity gain $G_{d}$, where as $m$ increases, $G_{d}$ increases resulting in a better behavior. This is expected, as $m$ increases, the quality of fading channel improves, and hence, better the achieved behavior.
	
	Fig. \ref{DO} portrays the diversity order of the system versus $N$ for different values of $m$. We can see from this figure that $m$ has higher impact on $G_{d}$, and hence, the system performance than the number of reflecting elements $N$. For example, a diversity order of $G_{d}=8$ is achieved when $N=10$ and $m=1$, whereas a diversity order of $G_{d}=9.75$ is almost achieved when $N=1$ and $m=10$. This clearly shows that $G_{d}$ could go below than $mN-1$ by a big amount when $N$ is larger than $m$, especially at high values of $N$ and $m$, but is almost floored by $mN-1$ when $m$ is larger than $N$ even at high values of $N$ and $m$.
	
	Fig. \ref{C1} studies the effect of $\Omega$ on the behavior. Obviously, as $\Omega$ increases, better the achieved capacity. In addition, we can see that as $\Omega$ continues increasing, the gain achieved in system capacity gets smaller.
	
	In the next two figures, we show how some key performance factors impact the determination of the value of $N^{\rm Opt}$. Here, we use the following parameters: operating frequency is 2 GHz, distances between the source and RIS, and between the RIS and destination are set to be 5 m and 10 m, respectively, and the channel shape and scale parameters of first and second links are, respectively set as $m_1 = 2$, $m_2 = 1$, $\Omega_{1}=0.4123$, and $\Omega_{2}=0.8973$. Fig. \ref{fig6} studies the effect of $\gamma_{\sf out}$ in determining $N^{\rm Opt}$ for SNR $=$ 15 dB and $P_{\sf out}^{\rm th} = 10^{-20}$. It is clear that increasing $\gamma_{\sf out}$ increases $N^{\rm Opt}$ as more elements are needed to be involved to achieve the required outage performance. The results also show the good matching between the exact and the logarithmic approximation approaches in terms of  $N^{\rm Opt}$ over a wide range of $\gamma_{\sf out}$. Clearly, the quadratic polynomial approximation starts with a high percentage error about 85.71\% at low $\gamma_{\sf out}$, however as $\gamma_{\sf out}$ keeps increasing this error decreases until it reaches almost 18.75\% at high $\gamma_{\sf out}$ values. This deficiency in the quadratic polynomial approach comes as an expense to its simplicity in finding the optimum number of elements.  
	
	Fig. \ref{fig7} studies the impact of $P_{\sf out}^{\rm th}$ on obtaining  $N^{\rm Opt}$ for the three proposed optimization approaches for SNR $=$ 15 dB and $\gamma_{\sf out} = 0$ dB. It is clear from this figure that increasing $P_{\sf out}^{\rm th}$ decreases $N^{\rm Opt}$ as the required performance is more degraded. Also, this figure illustrates that it is very important to choose the suitable $P_{\sf out}^{\rm th}$ to achieve a good matching in the results of the three optimization approaches in terms of $N^{\rm Opt}$. Finally, as $P_{\sf out}^{\rm th}$ decreases, the accuracy of the logarithmic and the quadratic polynomial approximation approaches in finding the optimal number of meta-surface elements increases.

	% if have a single appendix:
	%\appendix[Proof of the Zonklar Equations]
	% or
	%\appendix  % for no appendix heading
	% do not use \section anymore after \appendix, only \section*
	% is possibly needed
	
	% use appendices with more than one appendix
	% then use \section to start each appendix
	% you must declare a \section before using any
	% \subsection or using \label (\appendices by itself
	% starts a section numbered zero.)
	%
	%%%%%%%%%%%%%%%%%%%%%%%%%%%%%%%%%%%%%%%%%%%%%%%%%%%%%%%%%%%%%%%%%%%%%%%%%%%%%%%%%%%%%%
	%%%%%%%%%%%%%%%%%%%%%%%%%%%%%%%%% Conclusion %%%%%%%%%%%%%%%%%%%%%%%%%%%%%%%%%%%%%%%%%
	%%%%%%%%%%%%%%%%%%%%%%%%%%%%%%%%%%%%%%%%%%%%%%%%%%%%%%%%%%%%%%%%%%%%%%%%%%%%%%%%%%%%%%
	\section{Conclusion}\label{C}
	This letter investigated the outage and error rate performances, in addition to channel capacity of RIS-assisted networks in Nakagami-$m$ environment. The paper also provided closed-form expression for the optimum number of reflecting elements. In addition, it studied the asymptotic behavior in terms of diversity and coding gains. Results showed that the system can provide a diversity order of $\frac{a+1}{2}$, where $a$ is a function of the Nakagami-$m$ fading parameter $m$ and the number of meta-surface elements $N$. Moreover, findings illustrated that $m$ is more influential on the diversity gain and behavior than $N$.
	
	\clearpage
	
\begin{figure}[htb!]
		\centering
		\includegraphics[scale=0.7]{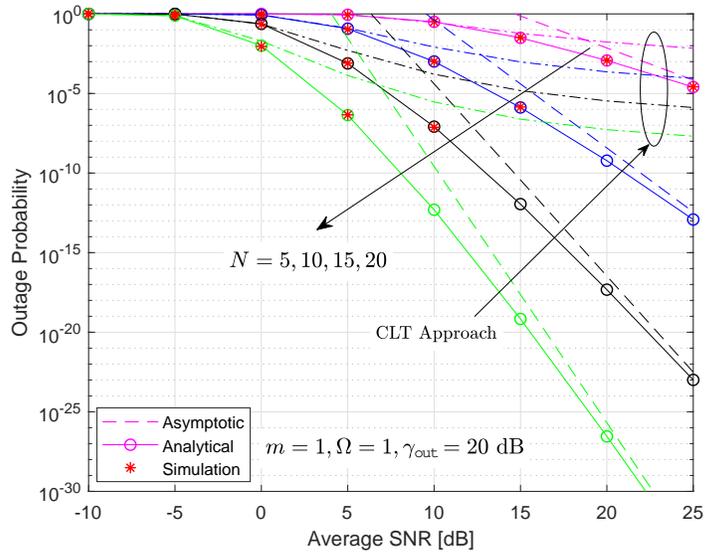}
		\caption{$P_{\sf out}$ vs SNR for various
			$N$.}\label{Pout_SNR_N}
	\end{figure}
	
	\begin{figure}[htb!]
		\centering
		\includegraphics[scale=0.7]{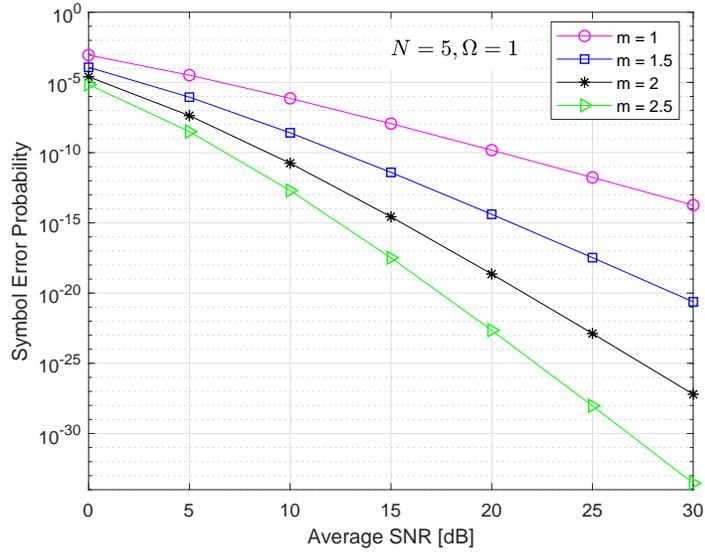}
		\caption{ASEP vs SNR for various
			$m$.}\label{BER}
	\end{figure}
	
	\begin{figure}[htb!]
		\centering
		\includegraphics[scale=0.7]{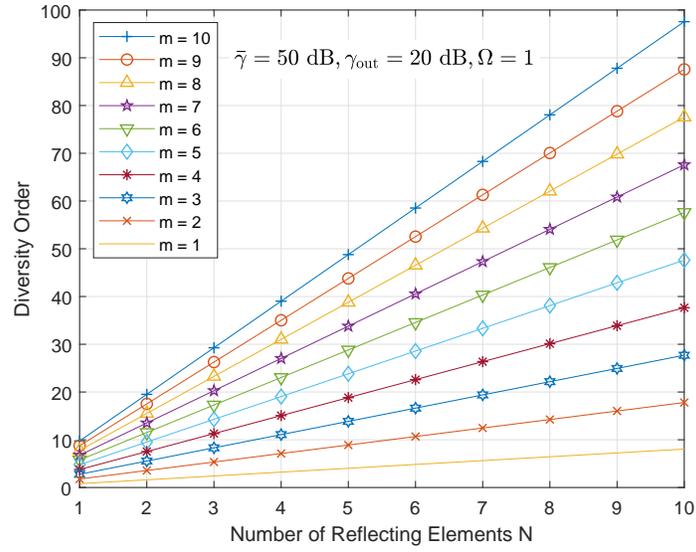}
		\caption{$G_{d}$ vs $N$ for various
			$m$.}\label{DO}
	\end{figure}
	
	\begin{figure}[htb!]
		\centering
		\includegraphics[scale=0.7]{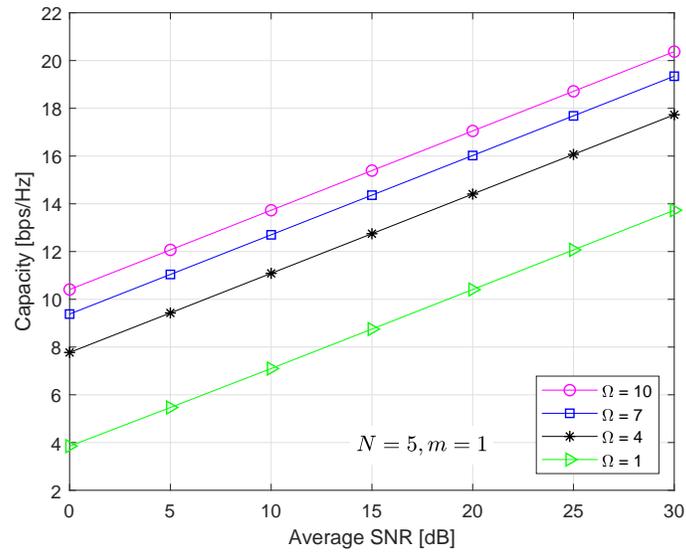}
		\caption{$C$ vs SNR for various $\Omega$.}\label{C1}
	\end{figure}
	
	\begin{figure}[!htb]
		\centering
		\includegraphics[scale=0.7]{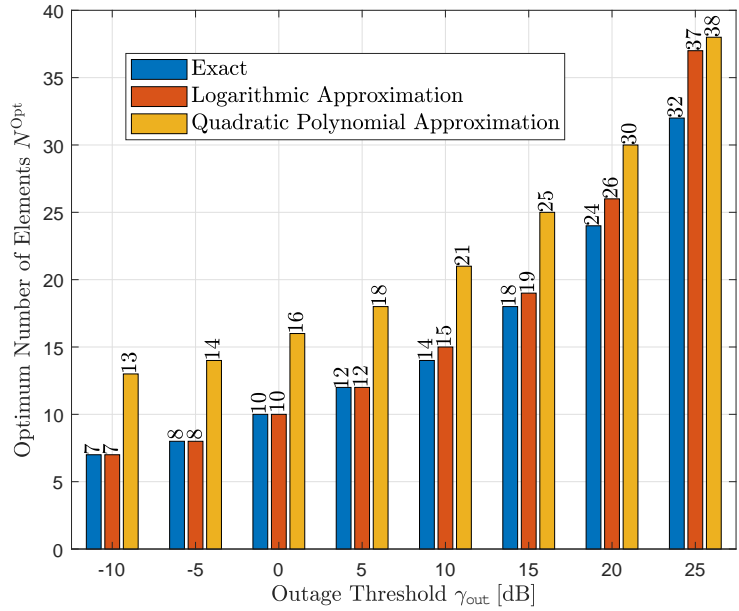}
		\caption{$N^{\rm Opt}$ vs $\gamma_{\sf out}$ for different optimization approaches.}
		\label{fig6}
	\end{figure}
	
	\begin{figure}[!htb]
		\centering
		\includegraphics[scale=0.7]{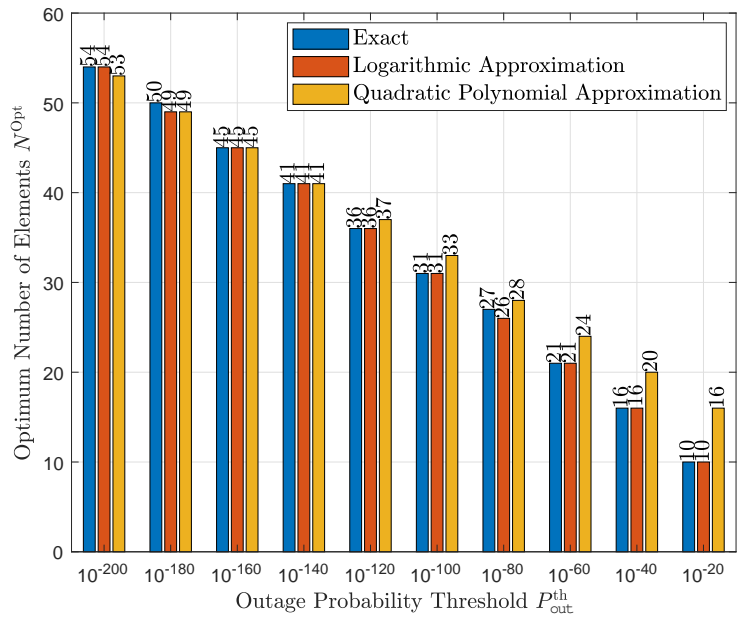}
		\caption{$N^{\rm Opt}$ vs $P_{\sf out}^{\rm th}$ for different optimization approaches.}
		\label{fig7}
	\end{figure}
	%%%%%%%%%%%%%%%%%%%%%%%%%%%%%%%%%%%%%%%%%%%%%%%%%%%%%%%%%%%%%%%%%%%%%%%%%%%%%%%%%%%%%%
	%%%%%%%%%%%%%%%%%%%%%%%%%%%%%%%%%% Appendix A %%%%%%%%%%%%%%%%%%%%%%%%%%%%%%%%%%%%%%%%
	%%%%%%%%%%%%%%%%%%%%%%%%%%%%%%%%%%%%%%%%%%%%%%%%%%%%%%%%%%%%%%%%%%%%%%%%%%%%%%%%%%%%%%
	%\section*{Appendix A\\ Proof of Lemma 1}\label{App.A}

	%%%%%%%%%%%%%%%%%%%%%%%%%%%%%%%%%%%%%%%%%%%%%%%%%%%%%%%%%%%%%%%%%%%%%%%%%%%%%%%
	%%%%%%%%%%%%%%%%%%%%%%%%%%%%%%%% Appendix B %%%%%%%%%%%%%%%%%%%%%%%%%%%%%%%%%%%
	%%%%%%%%%%%%%%%%%%%%%%%%%%%%%%%%%%%%%%%%%%%%%%%%%%%%%%%%%%%%%%%%%%%%%%%%%%%%%%%
	%\section*{Appendix B\\ Proof of Lemma 2}
	
	% you can choose not to have a title for an appendix
	% if you want by leaving the argument blank
	
	%%%%%%%%%%%%%%%%%%%%%%%%%%%%%%%%%%%%%%%%%%%%%%%%%%%%%%%%%%%%%%%%%%%%%%%%%%%%%%%%%%%%%%
	%%%%%%%%%%%%%%%%%%%%%%%%%%%%%%%%%%%% Acknowledgement %%%%%%%%%%%%%%%%%%%%%%%%%%%%%%%%%
	%%%%%%%%%%%%%%%%%%%%%%%%%%%%%%%%%%%%%%%%%%%%%%%%%%%%%%%%%%%%%%%%%%%%%%%%%%%%%%%%%%%%%%
	%\section*{Acknowledgment}
	%This work was funded by the National Plan for Science, Technology
	%and Innovation (Maarifah) - King Abdulaziz City for Science and
	%Technology - through the Science \& Technology Unit at King Fahd
	%University of Petroleum \& Minerals (KFUPM) - the Kingdom of Saudi
	%Arabia, under grant number 15-ELE4157-04.
	
	%\section{Competing Interests}
	%The authors do not have any competing interests.

	%The authors would like to thank...

	% Can use something like this to put references on a page
	% by themselves when using endfloat and the captionsoff option.
	\ifCLASSOPTIONcaptionsoff
	% \newpage
	\fi

	% trigger a \newpage just before the given reference
	% number - used to balance the columns on the last page
	% adjust value as needed - may need to be readjusted if
	% the document is modified later
	%\IEEEtriggeratref{8}
	% The "triggered" command can be changed if desired:
	%\IEEEtriggercmd{\enlargethispage{-5in}}
	
	% references section
	
	% can use a bibliography generated by BibTeX as a .bbl file
	% BibTeX documentation can be easily obtained at:
	% http://www.ctan.org/tex-archive/biblio/bibtex/contrib/doc/
	% The IEEEtran BibTeX style support page is at:
	% http://www.michaelshell.org/tex/ieeetran/bibtex/
	%\bibliographystyle{IEEEtran}
	% argument is your BibTeX string definitions and bibliography database(s)
	%\bibliography{IEEEabrv,../bib/paper}
	%
	% <OR> manually copy in the resultant .bbl file
	% set second argument of \begin to the number of references
	% (used to reserve space for the reference number labels box)
	%\newpage
	
	%\clearpage
	%%%%%%%%%%%%%%%%%%%%%%%%%%%%%%%%%%%%%%%%%%%%%%%%%%%%%%%%%%%%%%%%%%%%%%%%%
	%%%%%%%%%%%%%%%%%%%%%%%%%%%% Figures %%%%%%%%%%%%%%%%%%%%%%%%%%%%%%%%%%%%
	%%%%%%%%%%%%%%%%%%%%%%%%%%%%%%%%%%%%%%%%%%%%%%%%%%%%%%%%%%%%%%%%%%%%%%%%%
	%\begin{figure}[!htb]
	%\centering
	%\includegraphics[scale=0.476]{system}
	%\caption{Multiuser triple-hop mixed RF/FSO/RF relay network with
	%generalized order user scheduling.}\label{system}
	%\end{figure}

	% biography section
	\clearpage
\end{document}